\tikzstyle{ccyan}=[circle, draw, thick,fill=cyan!30, minimum size=12pt,inner sep=0pt]
\tikzstyle{cgrey}=[circle, draw, thick,fill=gray!30, minimum size=10pt,inner sep=0pt]
\tikzstyle{cgreys}=[circle, draw, thick,fill=gray!30, minimum size=12pt,inner sep=0pt]
\newcommand{\ket}[1]{| #1 \rangle}
\newcommand{\bra}[1]{\langle #1 |}
\newcommand{\overbar}[1]{\mkern 1.8mu\overline{\mkern-1.8mu#1\mkern-1.8mu}\mkern 1.8mu}
\newcommand{\comment}[1]{}
\newtheorem*{rep@theorem}{\rep@title}
\newcommand{\newreptheorem}[2]{%
\newenvironment{rep#1}[1]{%
 \def\rep@title{#2 \ref{##1}}%
 \begin{rep@theorem}}%
 {\end{rep@theorem}}}
\newtheorem{theorem}{Theorem}
\newtheorem{corollary}{Corollary}
\newtheorem{lemma}{Lemma}
\title{On the probability of finding marked connected components using quantum walks}
\author{Nikolajs Nahimovs$^1$, Raqueline A. M. Santos$^1$ and Kamil Khadiev$^{1,2}$}
\date{\small{$^1$Center for Quantum Computer Science, University of Latvia} \\ 
\small{Raina bulv. 19, Riga, LV-1586, Latvia}\\
\small{$^2$Kazan Federal University,} \\ 
\small{Kremlevskaya 18, Kazan, 420008, Russia}\\
\small{\texttt{nikolajs.nahimovs@lu.lv, rsantos@lu.lv, kamilhadi@gmail.com}}
}
\begin{document}

\maketitle


\begin{abstract}

\noindent
Finding a marked vertex in a graph can be a complicated task when using quantum walks.
Recent results show that for two or more adjacent marked vertices search by quantum walk with Grover's coin may have no speed-up over classical exhaustive search.
In this paper, we analyze the probability of finding a marked vertex for a set of connected components of marked vertices. We prove two upper bounds on the probability of finding a marked vertex and sketch further research directions. 

\end{abstract}

 
\section{Introduction}

Searching is an important problem in Computer Science. 
Using Grover's quantum algorithm~\cite{Grover:1996} one can solve the unstructured search problem quadratically faster than classically. A quadratic speed-up is also obtained when searching for a single marked vertex in some classes of graphs by using quantum walks~\cite{Szegedy:2004,Magniez:2012,Krovi:2016}. 
In case of multiple marked vertices the situation gets more tricky.
Krovi~{\it et al.}~\cite{Krovi:2016} gave a quantum walk based algorithm that achieves the quadratic speed-up for any reversible and ergodic Markov chain and showed that for multiple marked vertices it can search quadratically faster than a quantity called the ``extended hitting time'', which is equivalent to the hitting time for one marked vertex and lower-bounded by it.
Recently, Hoyer and Komeili~\cite{Hoyer:2017} described a quantum walk based algorithm for finding multiple marked vertices in the two-dimensional lattice. Their algorithm uses quadratically fewer steps than a random walk on the two-dimensional lattice, ignoring logarithmic factors.
On the other hand, for some quantum walk based search algorithms additional marked vertices can make the search easier or harder depending on the placement of marked vertices\cite{Nahimovs:2016}. 

In this paper, we consider search by coined discrete-time quantum walk~\cite{Aharonov:1993} on general graphs with multiple marked vertices. Suppose we have a graph $G = (V, E)$ with a set of vertices $V$ and a set of edges $E$. Let $n = |V|$ and $m = |E|$. 
The discrete-time quantum walk on $G$ has associated Hilbert space ${\cal H}^{2m}$ with the set of basis states $\{\ket{v,c}: v \in V, 0 \leq c < d_v \}$, where $d_v$ is the degree of vertex $v$. 
The evolution operator is the product of the coin operator followed by the shift operator, that is, $U= S\cdot C.$
The coin transformation $C$ is the direct sum of coin transformations for individual vertices, i.e. $C = C_{d_1}\bigoplus \cdots\bigoplus C_{d_n}$ with $C_{d_i}$ being the Grover diffusion transformation of dimension $d_i$.
The shift operator $S$ acts as $S\ket{v,c} = \ket{v',c'}$,
where $v$ and $v'$ are adjacent, $c$ and $c'$ represent the directions that points from $v$ to $v'$ and from $v'$ to $v$, respectively.  

Searching for a marked vertex is done using the unitary operator $U' = S \cdot C \cdot Q$,
where $Q$ is the query transformation, which flips the signs of the amplitudes at the marked vertices, that is,
\begin{equation}
Q = I - 2\sum_{w\in M}\sum_{c=0}^{d_w-1}\ket{w,c}\bra{w,c},
\end{equation}
with $M$ being the set of marked vertices. 

The initial state of the algorithm is the equal superposition over all vertex-direction pairs:
\begin{equation}\label{eq:psi0_gen}
\ket{\psi(0)} = \frac{1}{\sqrt{2m}} 
\sum_{v=0}^{n-1} \sum_{c = 0}^{d_v-1} \ket{v,c}.
\end{equation}
It can be easily verified that the initial state stays unchanged by the evolution operator $U$, regardless of the number of steps (the same holds for the search operator $U'$ is there are no marked vertices). 

In this model, Nahimovs,  Rivosh,  and  Santos~\cite{Nahimovs:2016, Nahimovs:2017} were able to define a set of configurations of marked vertices for which quantum walk search does not have any speed-up over the classical exhaustive search. The reason for this is that for such configurations the initial state of the algorithm (\ref{eq:psi0_gen}) is close to a 1-eigenvector of the search operator $U'$. Therefore, the probability of finding a marked vertex stays close to the initial probability and does not grow over time. 
Instead of analyzing the eigenspectrum of the search operator $U'$ for each configuration of marked vertices the authors of \cite{Nahimovs:2016, Nahimovs:2017} gave the general conditions for a state to be stationary (1-eigenvector of $U'$) in terms of amplitudes of individual vertices and, based on the conditions, they constructed the set of ``bad'' configurations of marked vertices (referred in the papers as {\em exceptional configurations}). 

Another type of exceptional configurations were found  by Ambainis and Rivosh for the two-dimensional lattice~\cite{Ambainis:2008}. In this case, when all vertices on the diagonal are marked, the quantum walk evolves by flipping the signs of the amplitudes of the initial state and the system remains in a uniform probability distribution for all time. Wong and Santos~\cite{Wong:2017} showed that the same  happens for the cycle and any higher-dimensional graph that reduces to the 1D line by using Szegedy's quantum walk model.

Recently, Pr\={u}sis, Vihrovs and Wong~\cite{Prusis:2016} have studied the existence of stationary states on general graphs with multiple marked vertices and found the necessary and sufficient conditions for a set of connected marked vertices to have a stationary state (i.e. to have an assignment of amplitudes which is a 1-eigenvector of the search operator $U'$).

In this paper, we consider a set of connected components (connected subsets) of marked vertices which has a stationary state. We show that the probability of finding a marked vertex is upper bounded by a function of the amplitudes of the stationary state as well as of the properties of the marked components. We give the exact equation of the upper bound function (for both single and multiple marked components) and sketch further research directions. 

The paper is structured as follows. In Section~\ref{sec:stationary} we review the results in the literature by describing in which cases a set of marked vertices forms a stationary state. In Section~\ref{sec:bounds} we study the behaviour of the probability of finding a marked vertex when we have a stationary state.  We draw our conclusions in Section~\ref{sec:conclusions}.


\section{Stationary states}\label{sec:stationary}

To start, let us introduce the notation. Let $G = (V, E)$ be a graph with a connected set of marked vertices $M$. Let $E_M = \{(i,j) \mid i, j\in M \}$ be the set of edges between marked vertices and $E_{\overbar{M}} = E\backslash E_M$ be its complement.
Let $d_i$ be the degree of vertex $i$. Let $d_i^M$ be the number of edges from a vertex $i$ to vertices in $M$ and $d_i^{\overbar{M}}$ be the number of edges from vertex $i$ to vertices in $V\backslash M$.
Trivially, $d_i = d_i^{\overbar{M}}+d_i^M$.
Let $D^{\overbar{M}} = \sum_{i \in M}{d_i^{\overbar{M}}}$ be the total ``outgoing'' degree of a marked set.

A state $\ket{\psi}$ is stationary if $U'\ket{\psi} = \ket{\psi}$. As an example, consider a step of the walk for a cycle of $5$ vertices with $2$ marked vertices shown on Fig. \ref{fig:cycle}.  
\begin{figure}[!htb]
\centering\includegraphics[scale=1]{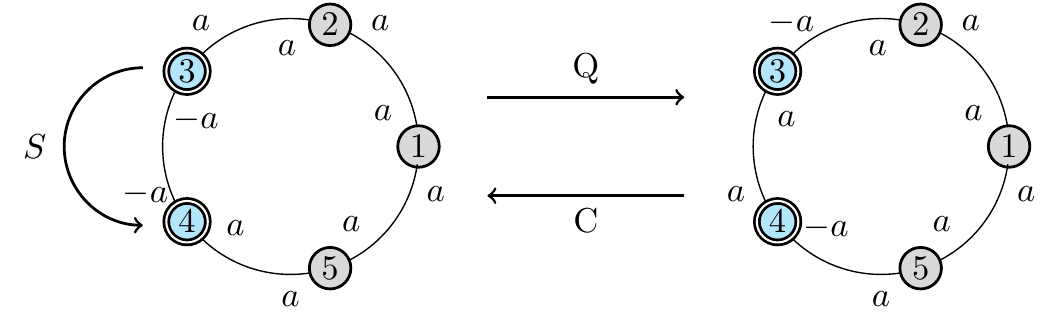}
\caption{Illustration of the application of the evolution operator $U' = SCQ$ to a cycle of 5 vertices with two marked vertices ($M=\{3,4\}$). Labels on edges represent directional amplitudes of a vertex. The state on the left side is a stationary state. The amplitudes of marked vertices pointing to each other are equal to $-a$, all other amplitudes are equal to $a$. In this case, the application of the query operator ($Q$) and the coin operator ($C$) will flip the sign of amplitudes in the marked vertices.}
\label{fig:cycle}
\end{figure}
For simplicity, we will use $\ket{i,j}$ to denote the direction amplitude of vertex $i$ pointing towards vertex $j$. Using this notation the state on the left is written as
\begin{equation*}
\begin{split}
\ket{\psi} &= a(\ket{1,2}+\ket{1,5}) + a(\ket{2,1}+\ket{2,3})+a(\ket{3,2}-\ket{3,4})+a(-\ket{4,3}+\ket{4,5})\\
&+a(\ket{5,4} +\ket{5,1}).
\end{split}
\end{equation*}
and the state on the right as
\begin{equation*}
\begin{split}
\ket{\psi^\prime} &= a(\ket{1,2}+\ket{1,5}) + a(\ket{2,1}+\ket{2,3})-a(\ket{3,2}-\ket{3,4})-a(-\ket{4,3}+\ket{4,5})\\
&+a(\ket{5,4} +\ket{5,1}).
\end{split}
\end{equation*}
When applying the evolution operator $U'=SCQ$ to the state $\ket{\psi}$, the amplitudes are changed from state $\ket{\psi}$ to state $\ket{\psi^\prime}$ and back. The query operator ($Q$) will flip the sign of directional amplitudes of the marked vertices. The coin operator ($C$) will undo the effect of the query operation by flipping the signs again, as amplitudes in the marked vertices add up to zero. And since the directional amplitudes of adjacent vertices pointing to each other are equal, the shift operator ($S$) has no effect on the state. Therefore, $\ket{\psi}$ is not changed by a step of the walk, i.e. it is stationary.

From this example, it is clear why a state with the following properties is stationary. 
\begin{theorem}[\cite{Nahimovs:2017}]\label{teo:stat}
Consider a state $\ket{\psi}$ with the following properties: all amplitudes of the unmarked vertices are equal; the sum of the amplitudes of any marked vertex is $0$; the amplitudes of two adjacent vertices pointing to each other are equal. Then $\ket{\psi}$ is a stationary state of the evolution operator $U^\prime$.
\label{teo:stat}
\end{theorem}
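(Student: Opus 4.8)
The plan is to apply the three factors of $U' = S\cdot C\cdot Q$ to $\ket{\psi}$ one at a time, from right to left, and verify that the composition acts as the identity on this particular state. Write $\psi_{v,c}$ for the amplitude of $\ket{v,c}$ in $\ket{\psi}$. The only fact I need about the coin is a local reformulation of the Grover diffusion: on a vertex $v$ of degree $d_v$ we have $C_{d_v} = 2\ket{s_v}\bra{s_v} - I$ with $\ket{s_v} = \frac{1}{\sqrt{d_v}}\sum_{c}\ket{v,c}$, so $C_{d_v}$ sends $\psi_{v,c}$ to $\frac{2}{d_v}\sum_{c'}\psi_{v,c'} - \psi_{v,c}$, i.e. it reflects each amplitude about the mean amplitude at $v$. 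In particular $C_{d_v}$ fixes a vertex whose directional amplitudes are all equal (the uniform vector $\ket{s_v}$, eigenvalue $+1$) and negates a vertex whose amplitudes sum to $0$ (the orthogonal complement of $\ket{s_v}$, eigenvalue $-1$).

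First I would apply $Q$, which flips the sign of every directional amplitude at the marked vertices and leaves the unmarked amplitudes untouched. Next I would apply $C$ and show that $C\,Q\ket{\psi}=\ket{\psi}$, treating the two kinds of vertices separately. At an unmarked vertex the directional amplitudes are all equal by the first hypothesis, $Q$ does nothing, and $C_{d_v}$ fixes them. At a marked vertex $w$ the second hypothesis gives $\sum_c \psi_{w,c}=0$, so after $Q$ the local amplitudes $-\psi_{w,c}$ still sum to $0$; plugging into the reflection formula yields $\frac{2}{d_w}\cdot 0 - (-\psi_{w,c}) = \psi_{w,c}$. Hence $C$ exactly undoes the sign flip of $Q$ on the marked vertices while fixing the unmarked ones, so $C\,Q\ket{\psi}=\ket{\psi}$.

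Finally I would apply $S$. Since $S$ merely swaps $\ket{v,c}$ (the state of $v$ pointing toward an adjacent $v'$) with $\ket{v',c'}$ (the state of $v'$ pointing back toward $v$), the amplitude that $S$ places on ``$v'$ pointing to $v$'' is the old amplitude on ``$v$ pointing to $v'$''. The third hypothesis says precisely that these two mutually-pointing amplitudes already agree, so $S$ leaves the state unchanged. Composing the three steps gives $U'\ket{\psi} = S\,C\,Q\ket{\psi} = S\ket{\psi} = \ket{\psi}$, as claimed.

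I expect the only delicate point to be the middle step, where one must recognize the Grover coin as a reflection about the mean and check that the two amplitude hypotheses are exactly the conditions placing each vertex in the $+1$ or $-1$ eigenspace of $C_{d_v}$, so that $C$ cancels $Q$ vertex by vertex. The outer step with $S$ and the bookkeeping of the direction labels is then routine, and no global assumption beyond the three stated local properties is needed.
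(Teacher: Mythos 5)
Your proof is correct and follows essentially the same route as the paper's own explanation: apply $Q$, $C$, $S$ in turn, noting that $C$ undoes the sign flip of $Q$ at marked vertices (amplitudes summing to zero put the vertex in the $-1$ eigenspace of the Grover coin), fixes unmarked vertices (equal amplitudes lie in the $+1$ eigenspace), and $S$ acts trivially because mutually-pointing amplitudes agree. You have merely made explicit, via the reflection-about-the-mean form $C_{d_v} = 2\ket{s_v}\bra{s_v} - I$, the vertex-by-vertex cancellation that the paper illustrates with its cycle example.
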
 

When a configuration of marked vertices forms a stationary state it is important to know how close is the stationary state to the initial state. 
Depending on that, it may or may not affect the search. Moreover, as we will see next, a configuration of marked vertices may have multiple stationary states.

According to Pr\={u}sis, Vihrovs and Wong~\cite{Prusis:2016}, the existence of a stationary state depends on whether a marked connected component is bipartite or not, that is,
\begin{theorem}[\cite{Prusis:2016}] \label{teo:bip} A bipartite marked connected component has a stationary state if and only if the sums of $d_i^{\overbar{M}}$ for each bipartite set are equal. A non-bipartite marked connected component always has a stationary state.
\label{teo:bip}
\end{theorem}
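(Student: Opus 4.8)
The plan is to reduce the existence of a stationary state to the solvability of a single system of linear equations indexed by the marked vertices, and then to read off the dichotomy from the rank of that system. By Theorem~\ref{teo:stat} it suffices to produce an assignment of directional amplitudes meeting its three conditions, so first I would normalise the common unmarked amplitude to $a=1$ (rescaling is harmless for an existence question). The symmetry condition, together with the fact that every amplitude sitting at an unmarked vertex equals $1$, forces each amplitude of a marked vertex $i$ that points out of $M$ to equal $1$ as well; there are exactly $d_i^{\overbar{M}}$ of these. Hence the only genuine freedom lies in the amplitudes on the internal edges $E_M$, and the symmetry condition says each such amplitude is a well-defined number $y_e$ attached to the undirected edge $e$. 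The zero-sum condition at each marked vertex $i$ then becomes
\begin{equation*}
\sum_{e \in E_M,\; e \ni i} y_e \;=\; -\,d_i^{\overbar{M}}.
\end{equation*}

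Next I would phrase this as $B\,y = b$, where $B$ is the $|M|\times|E_M|$ \emph{unsigned} incidence matrix of the marked component $H=(M,E_M)$, $y$ is the vector of internal-edge amplitudes, and $b_i = -d_i^{\overbar{M}}$. A stationary state of the prescribed form exists precisely when $b$ lies in the column space of $B$, i.e.\ when $b$ is orthogonal to the left null space $\ker B^{\mathsf T}$. The key computation is to describe this left null space: a vector $z$ lies in it iff $z_i + z_j = 0$ for every internal edge $\{i,j\}$, so $z$ must alternate in sign along every edge of the connected graph $H$. Here the bipartite/non-bipartite split enters. If $H$ is non-bipartite it contains an odd closed walk, and alternating the sign of $z$ around it returns $z_i = -z_i$, forcing $z_i=0$ and, by connectivity, $z\equiv 0$; thus $B$ has full row rank, $B\,y=b$ is solvable for \emph{every} $b$, and a stationary state always exists. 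If $H$ is bipartite with parts $A,B$, the same alternation pins $z$ down to the one-dimensional span of $z^\star = \mathbf 1_A - \mathbf 1_B$, so solvability is equivalent to the single scalar condition $\langle z^\star, b\rangle = 0$, which unwinds to $\sum_{i\in A} d_i^{\overbar{M}} = \sum_{i \in B} d_i^{\overbar{M}}$ — exactly the stated balance of outgoing degrees.

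The two ``if'' directions (existence in the non-bipartite case, and existence under the balance condition in the bipartite case) are then immediate from this Fredholm-type argument, and the ``only if'' in the bipartite case is immediate too, \emph{provided} we restrict to stationary states of the structured form guaranteed by Theorem~\ref{teo:stat}. I expect the main obstacle to be precisely this proviso: Theorem~\ref{teo:stat} supplies only a \emph{sufficient} family of stationary states, so to obtain a genuine ``only if'' (no stationary state whatsoever when the degrees are unbalanced) one must show that any solution of $U'\ket{\psi} = \ket{\psi}$ can be brought into this form. I would attack this by examining the equation $SCQ\ket{\psi}=\ket{\psi}$ coordinate by coordinate: since $Q$ is the identity outside $M$, the constraints at unmarked vertices involve only $S$ and the Grover coin $C$ and should force neighbouring unmarked amplitudes to agree, propagating constancy through the unmarked region, while at each marked vertex the interplay of $C_{d_i}$ with the sign flip of $Q$ should yield the zero-sum and the incoming/outgoing matching conditions. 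Turning these local constraints into the global claim that the amplitudes are constant on the unmarked region, so that the problem collapses back onto the internal system $B\,y=b$, is the delicate step; once it is in place, the rank dichotomy above completes the proof.
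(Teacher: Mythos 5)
First, a point of calibration: the paper you are critiquing does not prove Theorem~\ref{teo:bip} at all --- it is quoted from \cite{Prusis:2016} --- so your proposal has to be judged on its own merits rather than against an in-paper argument. The sufficiency half of your argument is correct and clean: with $a=1$, edge-symmetry forces all outgoing amplitudes of marked vertices to equal $1$, the internal amplitudes form a vector $y$ indexed by $E_M$, and the zero-sum condition is exactly $By=b$ with $B$ the unsigned incidence matrix of $(M,E_M)$ and $b_i=-d_i^{\overbar{M}}$ (your sign is the consistent one; note the paper's Eq.~(4) writes $\sum_j c_{ij}=+d_i^{\overbar{M}}$, which contradicts its own later use of $D^{\overbar{M}}+\sum_{i,j} c_{ij}=0$). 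The facts you invoke about $\ker B^{\mathsf{T}}$ --- trivial for a connected non-bipartite graph, spanned by the signed bipartition indicator for a connected bipartite graph --- are standard, and the Fredholm alternative then yields existence in the non-bipartite case and in the balanced bipartite case.

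The genuine gap is the one you flag yourself, but your proposed repair cannot work: it is \emph{false} that every solution of $U'\ket{\psi}=\ket{\psi}$ has constant amplitudes on the unmarked region, so no propagation argument will establish the only-if direction for arbitrary $1$-eigenvectors. Concretely, since $S$ and $CQ$ are commuting-block involutions, the $1$-eigenspace of $U'=S\cdot(CQ)$ decomposes as $\bigl(\mathrm{Fix}\,S\cap\mathrm{Fix}\,CQ\bigr)\oplus\bigl(\mathrm{Fix}(-S)\cap\mathrm{Fix}(-CQ)\bigr)$. The second summand consists of states antisymmetric across every edge, zero-sum at unmarked vertices and uniform at marked vertices; a dimension count shows such nonzero states exist on large graphs (e.g.\ the torus) irrespective of any balance condition. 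Even inside the first (symmetric) summand there are counterexamples with $a=0$: if the marked component contains an even cycle, placing alternating amplitudes $\pm c$ on its internal edges and $0$ everywhere else gives a nonzero $1$-eigenvector of $U'$, again regardless of balance. Both families are orthogonal to $\ket{\psi_0}$, which is the real point: the theorem is only true under the notion of stationary state actually used in \cite{Prusis:2016} and in this paper, namely a $1$-eigenvector carrying the nonzero uniform background of the initial state (equivalently, one not orthogonal to $\ket{\psi_0}$). With that definition your own machinery closes the argument: decomposing a $1$-eigenvector as above, the antisymmetric part and any $a=0$ symmetric part contribute nothing to the overlap with $\ket{\psi_0}$, so non-orthogonality forces a structured component with $a\neq 0$, and then solvability of $By=b$ forces the balance condition via $\langle \mathbf{1}_A-\mathbf{1}_B, b\rangle=0$. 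I recommend you state this restricted definition explicitly and replace the ``propagating constancy'' step by the two-involutions decomposition.
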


\noindent
In the example on Fig.~\ref{fig:cycle}, the connected component is bipartite and $d_{3}^{\overbar{M}} = d_{4}^{\overbar{M}} = 1$. Therefore, is forms a stationary state.
 
Now, suppose we have a graph with a marked connected component satisfying the Theorem \ref{teo:bip}. 
It has a stationary state (depicted on Fig.~\ref{fig:stat})
\begin{equation}\label{eq:st}
\ket{\psi_{ST}^a} = \sum_{\substack{i,j\in V\\ j\sim i}}{a \ket{i,j}} + \sum_{\substack{i, j\in M\\ j\sim i}}{(c_{ij} - 1)a\ket{i,j}},
\end{equation}
where $j\sim i$ means there is an edge connecting vertex $j$ to vertex $i$. 
\begin{figure}[!htb]
\centering
\includegraphics[scale=1]{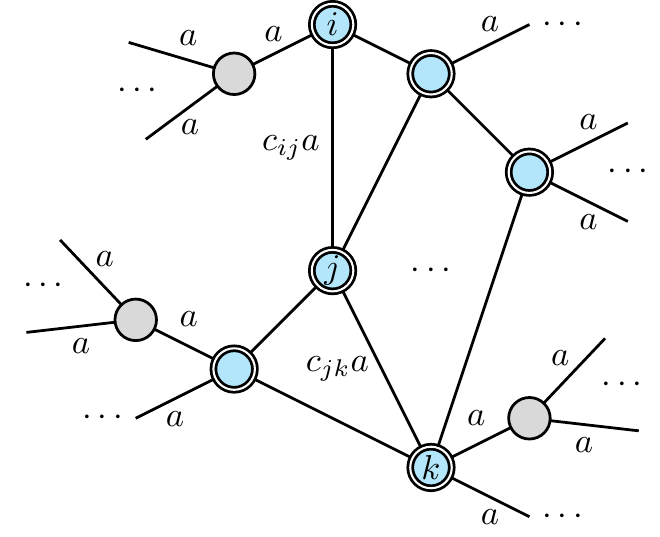}
\caption{Sketch of amplitudes of a stationary state of a marked connected component. The marked vertices are represented by double circles. All amplitudes of unmarked vertices are equal to $a$. For simplicity, we depicted only one of the amplitudes of vertices pointing to each other, since they are equal.}
\label{fig:stat}
\end{figure}
All amplitudes of unmarked vertices (represented by single circles) are equal to $a$, so the coin transformation have no effect on these vertices. The amplitudes of marked vertices (represented by double circles) pointing to unmarked vertices are also equal to $a$. The amplitude of the marked vertex $i$ pointing towards marked vertex $j$ is $c_{ij}a$. Note, that according to Theorem~\ref{teo:stat} $c_{ij} = c_{ji}$, so the shift operator have no effect on the marked component.  Moreover, the sum of the directional amplitudes of a marked vertex must be equal to zero. In this way, the coin operator will flip the sign of the amplitudes, undoing the effect of the query operator. Therefore, the amplitudes $c_{ij}$ should satisfy
\begin{equation}
\sum_{\substack{j\in M\\j\sim i}} c_{ij} = d_i^{\overbar{M}},\quad \forall i\in M.
\end{equation}

A marked connected component may have multiple (infinitely many) stationary states satisfying the properties given by Theorem~\ref{teo:stat}. For example, Fig.~\ref{fig:grid} shows two different stationary states for the same marked connected component in a two-dimensional lattice with $N$ vertices.
\begin{figure}[!htb]
\centering
\subcaptionbox{Stationary state with amplitudes $-3a$ at vertices $6$ and $10$, and $7$ and $11$ pointing to each other.}{
\includegraphics[scale=0.8]{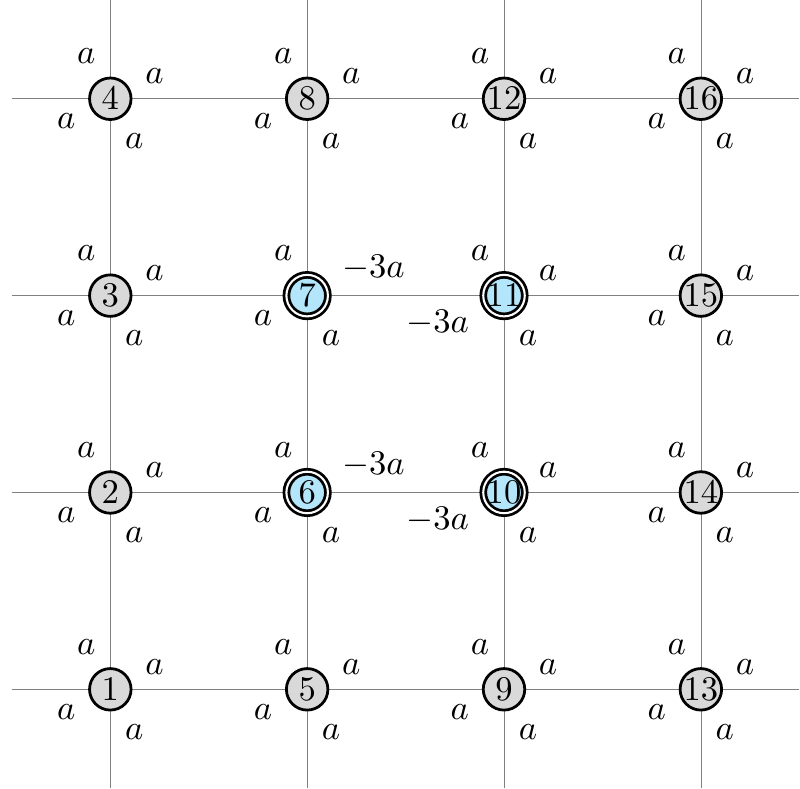}}
\subcaptionbox{Stationary state with amplitudes $-a$ at all marked vertices $\{6,7,10,11\}$ pointing to each other.}{
\includegraphics[scale=0.8]{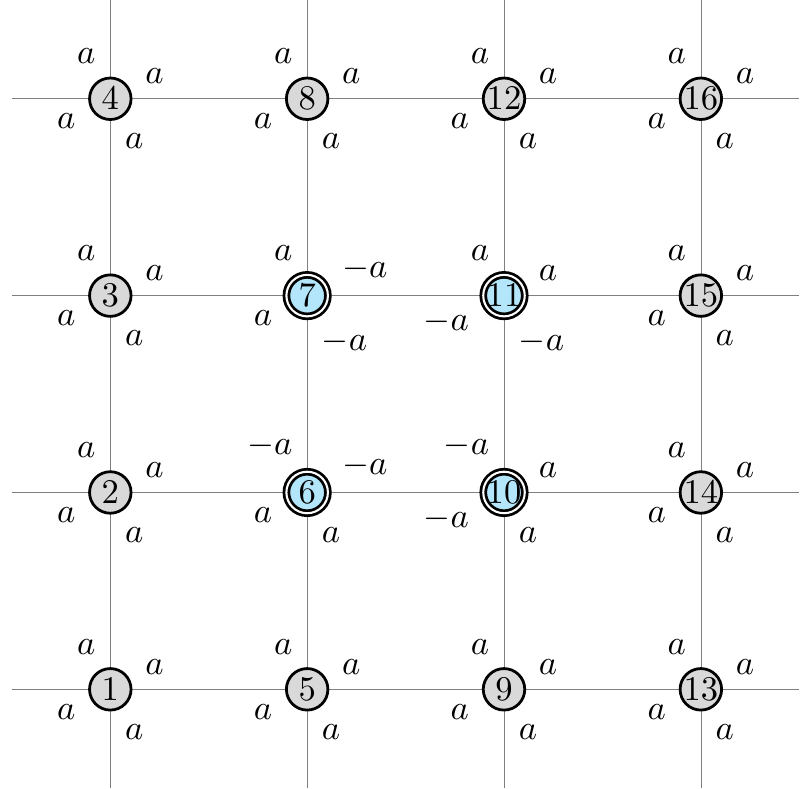}}
\caption{Stationary states in the two-dimensional lattice.}
\label{fig:grid}
\end{figure}

\noindent
It might seem that both states -- $\ket{\psi_1}$ (on the left) and $\ket{\psi_2}$ (on the right) -- have the same overlap with the initial state, however, this is not true.
Although, for the both states the overlap is $4a^2(N-4)$, the value of $a$ itself is different.
As the states are unit vectors, the sum of squares of all amplitudes needs to be $1$. Therefore, value of $a$ for $\ket{\psi_1}$ is $\frac{1}{\sqrt{4(N+8)}}$ and the value of $a$ for $\ket{\psi_2}$ is $\frac{1}{\sqrt{4N}}$.
The probability of finding a marked vertex is $48a^2 = \frac{12}{N+8}$ for $\ket{\psi_1}$ and $16a^2 = \frac{4}{N}$ for $\ket{\psi_2}$.

In the following section, we consider a connected set of marked vertices having a stationary state. We analyze the evolution of a state of the algorithm and we prove an upper bound for the probability of finding a marked vertex.


\section{Bounds on the probability}
\label{sec:bounds}

\subsection{Upper bound on the probability for a connected component of marked vertices}

\begin{theorem}\label{teo:bprob}
Consider a graph $G=(V,E)$ with a connected component of marked vertices $M$. Let $M$ be such that there exists a stationary state.
Then, the probability $p_M$ of finding a marked vertex, for any number of steps $t$, is
\begin{equation}
\label{probability_bound}
p_M \leq \frac{2}{m} \left( \sum_{\substack{i, j\in M\\ j\sim i}}{c_{ij}^2} + 2D^{\overbar{M}} + 2|E_M| \right) .
\end{equation}
\end{theorem}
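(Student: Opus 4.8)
The plan is to exploit the fact that the uniform initial state $\ket{\psi(0)}$ of (\ref{eq:psi0_gen}) and the stationary state $\ket{\psi_{ST}^a}$ of (\ref{eq:st}) coincide on every arc \emph{except} those joining two marked vertices, so that their difference lies entirely inside the marked subspace and therefore cannot grow under the unitary evolution once it is projected out. Write $\Pi_M = \sum_{w\in M}\sum_{c=0}^{d_w-1}\ket{w,c}\bra{w,c}$ for the orthogonal projector onto the marked arcs, so that $p_M(t)=\|\Pi_M\ket{\psi(t)}\|^2$ and $Q = I - 2\Pi_M$. The first step is to record the identity $\sqrt{2m}\,a\,\ket{\psi(0)} = \ket{\psi_{ST}^a} - \ket{\mu}$, where $\ket{\mu} := \sum_{\substack{i,j\in M\\ j\sim i}}(c_{ij}-1)a\ket{i,j}$ is supported on the marked-to-marked arcs; this follows by comparing (\ref{eq:st}) and (\ref{eq:psi0_gen}) arc by arc, since on every other arc both states carry amplitude $a$. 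In particular $\Pi_M\ket{\mu}=\ket{\mu}$.

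Next I would propagate this identity by $(U')^t$. Because $\ket{\psi_{ST}^a}$ is a $1$-eigenvector of $U'$ (Theorem~\ref{teo:stat}), applying $(U')^t$ and using $\ket{\psi(t)}=(U')^t\ket{\psi(0)}$ gives $\sqrt{2m}\,a\,\ket{\psi(t)} = \ket{\psi_{ST}^a} - (U')^t\ket{\mu}$, in which the stationary part is now frozen and all the time dependence sits in $(U')^t\ket{\mu}$. Projecting onto the marked subspace and using the triangle inequality together with the facts that $\Pi_M$ does not increase norm and $U'$ is unitary, I obtain the $t$-independent estimate
\begin{equation*}
\sqrt{2m}\,a\,\sqrt{p_M(t)} = \sqrt{2m}\,a\,\|\Pi_M\ket{\psi(t)}\| \leq \|\Pi_M\ket{\psi_{ST}^a}\| + \|\ket{\mu}\|.
\end{equation*}

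It then remains to evaluate the two norms on the right. The vanishing of the amplitude-sum at each marked vertex (Theorem~\ref{teo:stat}) reads $a\sum_{\substack{j\in M\\ j\sim i}}c_{ij} + a\,d_i^{\overbar{M}}=0$, i.e. $\sum_{\substack{j\in M\\ j\sim i}}c_{ij} = -d_i^{\overbar{M}}$ for every $i\in M$; summing over $i$ gives $\sum_{\substack{i,j\in M\\ j\sim i}}c_{ij} = -D^{\overbar{M}}$, and there are $2|E_M|$ marked-to-marked arcs. A direct count of amplitudes yields $\|\Pi_M\ket{\psi_{ST}^a}\|^2 = a^2\big(\sum_{\substack{i,j\in M\\ j\sim i}}c_{ij}^2 + D^{\overbar{M}}\big)$ and, expanding $(c_{ij}-1)^2$, $\|\ket{\mu}\|^2 = a^2\big(\sum_{\substack{i,j\in M\\ j\sim i}}c_{ij}^2 + 2D^{\overbar{M}} + 2|E_M|\big)$. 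Since $D^{\overbar{M}},|E_M|\geq 0$, the first norm is dominated by the second, so the right-hand side is at most $2\|\ket{\mu}\|$. Squaring the displayed inequality, cancelling the common factor $a^2$, and dividing by $2m$ then gives exactly (\ref{probability_bound}); note that the scale $a$ of the stationary state drops out, so the bound is insensitive to its normalization.

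The only genuinely delicate point is the opening observation that $\ket{\psi(0)}$ and $\ket{\psi_{ST}^a}$ differ only on the marked-to-marked arcs, since this is what confines the entire time-dependent deviation to the marked subspace and makes the estimate uniform in $t$; everything afterwards is the triangle inequality plus bookkeeping of amplitudes. A secondary subtlety is the correct sign in the constraint, $\sum_{j}c_{ij}=-d_i^{\overbar{M}}$, as forced by the vanishing amplitude-sum and confirmed by the cycle of Fig.~\ref{fig:cycle}, since it is precisely this sign that turns $\|\ket{\mu}\|^2$ into the parenthesized expression of (\ref{probability_bound}).
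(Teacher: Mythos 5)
Your proof is correct and recovers exactly the bound of Eq.~(\ref{probability_bound}), but it reaches it by a cleaner route than the paper. Both arguments start from the same decomposition: the uniform initial state equals (a multiple of) the stationary state (\ref{eq:st}) minus a deviation $\ket{\mu}$ supported on the marked-to-marked arcs. The paper then asserts that the probability is maximized when the evolved deviation ``is distributed over the marked vertices only'', turns this into a constrained maximization over amplitudes on the marked arcs, and solves that with the technical Lemma~\ref{lemma:technical} (proved geometrically in the appendix); the maximum it obtains, $(d+r)^2$, is precisely your $\bigl(\|\Pi_M\ket{\psi_{ST}^a}\|+\|\ket{\mu}\|\bigr)^2$. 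You obtain the same quantity in one step from the triangle inequality together with contractivity of the projector $\Pi_M$ and unitarity of $U'$, which buys two things: the appendix lemma and the optimization set-up disappear entirely, and the paper's informal ``clearly the probability is maximized when\dots'' step becomes a rigorous operator estimate valid uniformly in $t$. You also make explicit two points the paper leaves implicit: first, the passage to $4\|\ket{\mu}\|^2$ requires $\|\Pi_M\ket{\psi_{ST}^a}\|\leq\|\ket{\mu}\|$, which you justify from $D^{\overbar{M}},|E_M|\geq 0$, whereas the paper applies this silently in its last displayed inequality; second, the coin condition carries the sign $\sum_{j\in M,\, j\sim i}c_{ij}=-d_i^{\overbar{M}}$ -- your version is the correct one, consistent with the relation $D^{\overbar{M}}+\sum c_{ij}=0$ used inside the paper's proof and with the example of Fig.~\ref{fig:cycle}, while the corresponding display in Section~\ref{sec:stationary} of the paper states it with the opposite (incorrect) sign.
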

\begin{proof}

Consider the amplitudes of the stationary state: for all $\ket{i,j}$, where $i \notin M$ or $j \notin M$, the amplitudes are equal to $a$. For $\ket{i,j}$, where $i,j \in M$, let the amplitudes be $c_{ij} \cdot a$. The stationary state, then, can be written as 
$$ \ket{\psi_{ST}^a} = \sum_{\substack{i,j\in V\\ j\sim i}}{a \ket{i,j}} + \sum_{\substack{i, j\in M\\ j\sim i}}{(c_{ij} - 1)a\ket{i,j}} .$$
Thus, we have 
$$ \ket{\psi_0} = \ket{\psi_{ST}^a} - \sum_{\substack{i, j\in M\\ j\sim i}}{(c_{ij} - 1)a\ket{i,j}},$$
for $a=1/\sqrt{2m}$.
We will denote the changing part of the initial state as $\ket{\psi_{NST}}$.

Out task is to upper bound the probability of finding a marked vertex $p_M$, that is, to find a distribution of $\ket{\psi_{NST}}$ over the graph, which maximizes the probability of finding a marked vertex.
Clearly, the probability is maximized when the $\ket{\psi_{NST}}$ is distributed over the marked vertices only. Therefore, our task is to maximize 
\begin{equation}
\label{probatility_bound_pr}
p_M = \sum_{i \in M}{\left[ \sum_{\substack{j \in V\backslash M\\j\sim i}}{(a + \alpha)^2} + \sum_{\substack{j\in M\\j\sim i}}{(c_{ij}a + \alpha_{ij})^2} \right]} 
\end{equation}
subject to
\begin{equation}
\label{probatility_bound_subject}
\sum_{i \in M}{\left[ \sum_{\substack{j \in V\backslash M\\j\sim i}}{\alpha^2} + \sum_{\substack{j\in M\\j\sim i}}{\alpha_{ij}^2} \right]} = ||\ket{\psi_{NST}}||^2, 
\end{equation}
because the evolution is unitary and the norm will not be changed during the evolution. In Eq.~({\ref{probatility_bound_pr}}), $a$ and $c_{ij}a$ come from the stationary part of the state and $\alpha$ and $\alpha_{ij}$ come from the non-stationary part.

Marked vertex $i$ has an ``outgoing'' degree $d_i^{\overbar{M}}$. Therefore, we can rewrite Eqs.~(\ref{probatility_bound_pr})-(\ref{probatility_bound_subject}) as
\begin{equation}
p_M = \sum_{i \in M}{\left[ d_i^{\overbar{M}} (a + \alpha)^2 + \sum_{\substack{j\in M\\j\sim i}}{(c_{ij}a + \alpha_{ij})^2} \right]} 
\end{equation}
subject to
\begin{equation}
\sum_{i \in M}{\left[ d_i^{\overbar{M}} \alpha^2 + \sum_{\substack{j\in M\\j\sim i}}{\alpha_{ij}^2} \right]} = ||\ket{\psi_{NST}}||^2 .
\end{equation}

Let $D^{\overbar{M}} = \sum_{i \in M}{d_i^{\overbar{M}}}$ be the total ``outgoing'' degree of a marked component. Then, we have
\begin{equation}\label{eq:pm}
p_M = D^{\overbar{M}} (a + \alpha)^2 + \sum_{\substack{i, j\in M\\ j\sim i}}{(c_{ij}a + \alpha_{ij})^2} .
\end{equation}
subject to
\begin{equation}
D^{\overbar{M}} \alpha^2 + \sum_{\substack{i, j\in M\\ j\sim i}}{\alpha_{ij}^2} = ||\ket{\psi_{NST}}||^2 .
\end{equation}

Next, we will use the following technical Lemma~\ref{lemma:technical}(proved in Appendix~\ref{ap:proof}).
\begin{lemma}\label{lemma:technical}
Let  $f(X)=\sum_{i=1}^{n}(x_i-a_i)^2$ and $r^2=\sum_{i=1}^{n}x_i^2$. Then 
\[argmax_{X}f=(-d\cdot a_1,\dots,-d\cdot a_n)\mbox{, for }d=\frac{r}{\sqrt{\sum_{i=1}^n a_i^2}}.\]
\end{lemma}

\noindent
From the lemma we have that the probability reaches its maximum for $\alpha = \frac{r}{d} a$ and $\alpha_{ij} = \frac{r}{d} c_{ij}a$, where 
\begin{equation}
\label{probatility_bound_r}
r = ||\ket{\psi_{NST}}|| = \sqrt{\sum_{\substack{i, j\in M\\ j\sim i}}{(c_{ij} - 1)^2 a^2}} 
\end{equation}
and
\begin{equation}
\label{probatility_bound_d}
d = \sqrt{D^{\overbar{M}} a^2 + \sum_{\substack{i, j\in M\\ j\sim i}}{c_{ij}^2 a^2}} .
\end{equation}

Thus, from Eq.~(\ref{eq:pm}), the probability $p_M$ reaches its maximum for  
$$
D^{\overbar{M}} \left(a + \frac{r}{d}a\right)^2 + \sum_{\substack{i, j\in M\\ j\sim i}}{\left(c_{ij}a + \frac{r}{d} c_{ij}a\right)^2} =
$$
$$
\left(1 + \frac{r}{d}\right)^2 \left( D^{\overbar{M}} a^2 + \sum_{\substack{i, j\in M\\ j\sim i}}{c_{ij}^2 a^2} \right) =
$$
$$
\left(1 + \frac{r}{d}\right)^2 d^2 = (d + r)^2 .
$$
Using values for $r$ and $d$ from Eqs.~(\ref{probatility_bound_r}) and (\ref{probatility_bound_d}) we obtain
$$
a^2 \left( \sqrt{D^{\overbar{M}} + \sum_{\substack{i, j\in M\\ j\sim i}}{c_{ij}^2}} + \sqrt{\sum_{\substack{i, j\in M\\ j\sim i}}{(c_{ij} - 1)^2}} \right)^2 .
$$
Opening the brackets (under the second square root) and using that
$$ D^{\overbar{M}} + \sum_{\substack{i, j\in M\\ j\sim i}}{c_{ij}} = 0,$$
we have
$$
a^2 \left( \sqrt{D^{\overbar{M}} + \sum_{\substack{i, j\in M\\ j\sim i}}{c_{ij}^2}} + \sqrt{\sum_{\substack{i, j\in M\\ j\sim i}}{c_{ij}^2} + 2D^{\overbar{M}} + 2|E_M|} \right)^2 .
$$
Therefore,
\begin{equation}
\label{probability_bound}
p_M \leq 4a^2 \left( \sum_{\substack{i, j\in M\\ j\sim i}}{c_{ij}^2} + 2D^{\overbar{M}} + 2|E_M| \right) .
\end{equation}
\end{proof}

Note that the first term in Eq.~(\ref{probability_bound}) depends on the stationary state, while the two others -- on the structure of the graph and the marked component. 

For a given graph and a marked component there are infinitely many stationary states. Each stationary state gives a bound on the probability. For a tight bound one needs to consider the stationary state with the minimal $\sum_{\substack{i, j\in M\\ j\sim i}}{c_{ij}^2}$.
This might be a hard task in the general case, and it is still an open question. 


\subsection{Upper bound on the probability for multiple marked components}

Consider we have a disjoint set of $k$ marked connected components M = $\{M_1\cup M_2\cup \dots\cup M_k\}$.
Let $E_{M_l} = \{(i,j) \in E \mid i,j \in M_l\}$ be the set of edges inside the marked component $M_l$ and let $d_i^{\overbar{M_l}}$ be the number of edges from $i$ to a vertex in $V\backslash M_l$. Then, it easily follows from Theorem~\ref{teo:bprob} that
\begin{corollary} Consider a disjoint set of $k$ connected components of marked vertices $\{M_1, M_2,\dots, M_k\}$ such that there exists a stationary state.
Then, the probability $p_M$ of finding a marked vertex, for any number of steps $t$, is
\begin{equation}
p_M \leq \frac{2}{m}\sum_{l=1}^k\left( \sum_{\substack{i, j\in M_l\\ j\sim i}}{c_{ij}^2} + 2D^{\overbar{M_l}} + 2|E_{M_l}|\right),
\end{equation}
where $D^{\overbar{M_l}} = \sum_{i\in M_l}d_i^{\overbar{M_l}}$.
\end{corollary}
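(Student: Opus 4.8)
The plan is to apply Theorem~\ref{teo:bprob} to the union $M=\bigcup_{l=1}^{k}M_l$, treated as a single marked set, and then observe that the right-hand side of Eq.~(\ref{probability_bound}) splits additively over the components. The first thing I would check is that the argument proving Theorem~\ref{teo:bprob} never uses connectedness of the marked set: it relies only on the form~(\ref{eq:st}) of the stationary state and on the vertexwise zero-sum condition, both of which are available here by hypothesis (each $M_l$ has a stationary state, and these combine into one global stationary state with amplitude $a$ on every unmarked-pointing direction and $c_{ij}a$ on marked-to-marked directions within a component). Hence the inequality $p_M\le 4a^2\bigl(\sum_{i,j\in M,\,j\sim i}c_{ij}^2+2D^{\overbar{M}}+2|E_M|\bigr)$ with $a=1/\sqrt{2m}$ holds verbatim for the disjoint union.

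The key step is the additive decomposition of the three structural quantities. Because the $M_l$ are \emph{distinct} connected components of the marked set, there are no edges joining two different components; consequently, for any $i\in M_l$ every edge leaving $M_l$ also leaves $M$, so $d_i^{\overbar{M}}=d_i^{\overbar{M_l}}$ and $D^{\overbar{M}}=\sum_{l}D^{\overbar{M_l}}$. For the same reason the marked edge set is a disjoint union, $E_M=\bigsqcup_l E_{M_l}$, giving $|E_M|=\sum_l|E_{M_l}|$, and every marked-to-marked direction has both endpoints in a single component, so $\sum_{i,j\in M,\,j\sim i}c_{ij}^2=\sum_l\sum_{i,j\in M_l,\,j\sim i}c_{ij}^2$. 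Substituting these into the single-set bound and using $4a^2=2/m$ yields exactly the claimed inequality.

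A complementary way to see why the bound is additive is to note that the stated bound equals $4\,\|\ket{\psi_{NST}}\|^2$: expanding $\sum_{i,j\in M,\,j\sim i}(c_{ij}-1)^2$ and using $\sum_{i,j\in M,\,j\sim i}c_{ij}=-D^{\overbar{M}}$ turns $\|\ket{\psi_{NST}}\|^2$ into $a^2\bigl(\sum c_{ij}^2+2D^{\overbar{M}}+2|E_M|\bigr)$. Since the non-stationary part splits as $\ket{\psi_{NST}}=\sum_l \ket{\psi_{NST}^{(l)}}$ with the summands supported on the pairwise-disjoint marked-edge sets $E_{M_l}$, these summands are orthogonal and their squared norms add, so $\|\ket{\psi_{NST}}\|^2=\sum_l\|\ket{\psi_{NST}^{(l)}}\|^2$. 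This reproduces the component sum directly and shows the additivity is genuine rather than an artifact of notation.

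I do not expect a real obstacle here; the statement is essentially a bookkeeping consequence of Theorem~\ref{teo:bprob}. The one point that deserves care is the disjointness hypothesis: it must be read as ``distinct connected components of the marked set'' with no inter-component marked edges, for otherwise $d_i^{\overbar{M}}$ and $d_i^{\overbar{M_l}}$ would differ and the clean additive split of $D^{\overbar{M}}$ would fail. Under that reading the proof reduces to a direct substitution and requires no new estimate.
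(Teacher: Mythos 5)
Your proposal is correct and follows the paper's intended route: the paper simply states that the corollary ``easily follows'' from Theorem~\ref{teo:bprob}, and your argument supplies exactly the missing bookkeeping --- that the theorem's proof never uses connectedness, that disjoint components admit no inter-component marked edges, and that consequently $D^{\overbar{M}}$, $|E_M|$, and $\sum c_{ij}^2$ all split additively. Your observation that the bound equals $4\|\ket{\psi_{NST}}\|^2$ and hence is additive by orthogonality of the component-supported parts is a nice sanity check, but it is a reformulation of the same computation rather than a different proof.
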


For example, if we consider a $d$-regular graph with a set of marked vertices which consist of $k$ pairs of adjacent marked vertices (i.e. $|M_1| = |M_2| = \dots = |M_k| = 2$). Then, the probability of finding a  marked vertex, for any number of steps $t$, is $O\left(\frac{kd^2}{m}\right)$, where $m$ is the number of edges of the graph. Note that $D^{\overbar{M_l}} = 2(d-1)$ and $|E_{M_l}|=1$ for all $l=1,\dots ,k$.






\section{Conclusions}\label{sec:conclusions}

Due to the interference phenomena, quantum walks behave differently from classical random walks. On the one hand, it can achieve a quadratic speed-up when searching for one marked vertex in a graph. On the other hand, additional marked vertices can make the search harder. We have seen that a placement of marked vertices on a graph can form a stationary state. However, having a stationary state does not automatically mean that the quantum search will not be able to find a marked vertex faster than classically. That is why we need to understand how the probability of finding a marked vertex behaves during the evolution. We proved that the probability is upper bounded by a function on the amplitudes of the stationary state and on the structure of the marked components. 

As we have seen, there are infinitely many stationary states for a given set of marked connected components.
It is still an open problem to find which stationary state gives the minimum probability to find a marked vertex. In this way, we can obtain a tighter bound on the probability.

Another interesting question, is whether we can find applications for the exceptional configurations. One idea is to solve the problem of bipartite matching. Given a bipartite graph, the goal is to determine whether a perfect matching exists. Our initial idea is to embed the graph into the two-dimensional lattice and make all its vertices marked. We will need to make some restrictions in the graph for that. Then, we claim that a perfect matching will exist if the marked vertices forms a stationary state. We plan to investigate this problem in the near future.


\subparagraph*{Acknowledgements.}
The authors thank A. Ambainis, A. Rivosh, K. Pr\={u}sis and J. Vihrovs for useful discussions and comments.

This work was supported by the RAQUEL (Grant Agreement No.~323970) project, the Latvian State Research Programme NeXIT
project No.~1, the ERC Advanced Grant MQC and ERDF project number 1.1.1.2/VIAA/1/16/002.


\bibliography{refs}


\appendix

\section{Proof of Technical Lemma}\label{ap:proof}

\begin{replemma}{lemma:technical}
Let  $f(X)=\sum_{i=1}^{n}(x_i-a_i)^2$ and $r^2=\sum_{i=1}^{n}x_i^2$. Then 
\[argmax_{X}f=(-d\cdot a_1,\dots,-d\cdot a_n)\mbox{, for }d=\frac{r}{\sqrt{\sum_{i=1}^n a_i^2}}.\]
\end{replemma}
\begin{proof}

Let  $f(X)=\sum_{i=1}^{n}(x_i-a_i)^2$. We want to find the maximal value of $f(X)$, such that $r^2=\sum_{i=1}^{n}x_i^2$.
Let us find $argmax_{X}f$.
Observe that $r^2=\sum_{i=1}^{n}x_i^2$ is the equation of a $(n-1)$-sphere, denote it $S_1$. Note that the center of $S_1$ is $(0,\dots,0)$. $f(X)$ is the radius of a $(n-1)$-sphere, denote it $S_2$. We should find the maximal radius of the sphere such that $S_1$ and $S_2$ still have common points.

Let point $O$ be the center of $S_1$, point $A$ be the center of $S_2$ and $B$ be the intersection point of the spheres. Then, $|OB|=r$, $|AB|=\sqrt{f}$. Using the Triangle Inequality we can say that $|AB|\leq |OB|+|AO|$. It means that $|AB|$ will achieve its maximum when $|AB|=|OB|+|AO|$, therefore $B$ belongs to the line $OA$.
\begin{center}
\includegraphics[scale=0.5]{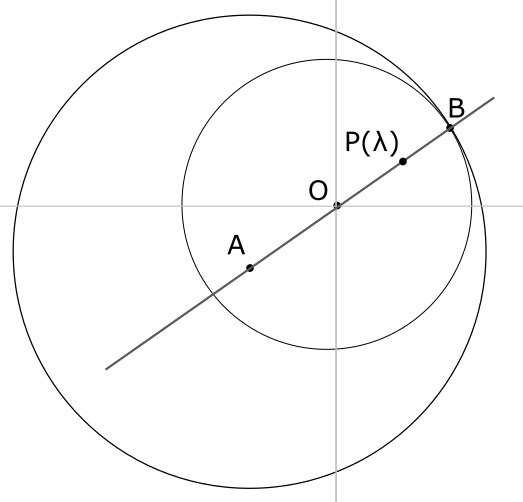}
\end{center}

Let us consider coordinates of any point on the line $OA$. This is: $P(\lambda)=(\lambda a_1,\dots,\lambda a_n)$ for some real $\lambda$. Note that $P(1)=A$. Let us compute $\lambda_0$, such that $P(\lambda_0)=B$.
The length of segment $OP(\lambda)$ is
$$
|OP(\lambda)|=\sqrt{\sum_{i=1}^n(\lambda a_i)^2}=|\lambda|  \sqrt{\sum_{i=1}^n a_i^2}.
$$
Recall, that $|OP(\lambda_0)|=|OB|=r$, therefore
$$
r=|\lambda_0| \sqrt{\sum_{i=1}^n a_i^2},\mbox{ and } 
|\lambda_0|=\frac{r}{\sqrt{\sum_{i=1}^n a_i^2}}.
$$

Note that $\lambda_0<0$ because $|P(1)P(|\lambda_0|)|<|P(1)P(-|\lambda_0|)|$.
Therefore, $B=(-d\cdot a_1,\dots,-d\cdot a_n)$ for 
$$
d=\frac{r}{\sqrt{\sum_{i=1}^n a_i^2}}.
$$
In other words,
 \[argmax_{X}f=(-d\cdot a_1,\dots,-d\cdot a_n)\mbox{, for }d=\frac{r}{\sqrt{\sum_{i=1}^n a_i^2}}.\]
 
\end{proof}

\end{document}